\documentclass[12pt]{article}

\usepackage[letterpaper,margin=1in]{geometry}
\usepackage{amsmath,amssymb,amsthm,mathtools}
\usepackage{array,booktabs,tabularx}
\usepackage[round,authoryear]{natbib}
\usepackage[expansion=false]{microtype}
\usepackage[hidelinks]{hyperref}

\newtheorem{lemma}{Lemma}
\newtheorem{proposition}{Proposition}
\newtheorem{corollary}{Corollary}

\newcommand{\Bundled}{\mathsf{B}}
\newcommand{\Narrow}{\mathsf{N}}
\newcommand{\Separate}{\mathsf{F}}
\newcommand{\E}{\mathbb{E}}

\title{Task Architecture and Learning from Coarse Performance\thanks{We thank Harry Pei and participants in the ICEF research seminar for helpful discussions. All remaining errors are our own.}}
\author{
Ekaterina Korotkova\thanks{International College of Economics and Finance,
National Research University Higher School of Economics, 11 Pokrovsky
Boulevard, Moscow 109028, Russia.}
\and
Georgy Lukyanov\thanks{Toulouse School of Economics, 1 Esplanade de
l'Universit\'e, Toulouse 31080, France. Corresponding author:
\href{mailto:georgy.lukyanov@tse-fr.eu}{georgy.lukyanov@tse-fr.eu}.}
}
\date{August 2026}

\hypersetup{
  pdftitle={Task Architecture and Learning from Coarse Performance},
  pdfauthor={Ekaterina Korotkova and Georgy Lukyanov}
}

\begin{document}

\maketitle

\begin{abstract}
Organizations often learn about competence only from project-level success or failure, even when a project contains several complementary tasks. We compare task architectures by the Blackwell order. An expert of unknown fixed competence can perform all tasks in one bundled project, one task in a project completed by an outside technology, or the same number of tasks across separate projects. Bundling dominates a single narrow assignment below an outside-reliability threshold and is otherwise incomparable with it. Holding the expert's workload fixed strictly lowers this threshold but does not overturn the result: bundling still dominates when the outside technology is sufficiently unreliable, separate projects dominate only when that technology is perfect, and the experiments are otherwise incomparable. We derive the thresholds for any number of tasks and show that the fixed-workload threshold decreases to zero as task scope grows. Explicit posterior-variance formulas measure the cost of coarse aggregation for particular decisions. Finally, in the two-task case, occasional stage-level audits expand the bundling-dominance region according to an exact frontier.
\end{abstract}

\noindent\textbf{Keywords:} task assignment; coarse information; competence;
Blackwell order; auditing.

\medskip
\noindent\textbf{JEL classification:} D23; D83; M51.

\bigskip
\section{Introduction}

Performance is often observed at a coarser level than the work that produces it. A medical episode combines diagnosis and treatment, a legal matter combines analysis and execution, and a software release combines several modules. The organization may see whether the episode, matter, or release succeeded without seeing which task was done correctly. This creates a task-assignment problem that is absent when every task produces its own signal: assigning several tasks to one expert changes not only production and coordination, but also the experiment through which the organization learns about the expert's competence.

Bundling complementary tasks has two opposing informational effects. A successful bundled project is strong good news because the same expert must have succeeded repeatedly. Failure, however, pools several possible mistakes. Giving the expert only one task makes success weaker good news, but a failure can be more diagnostic when the complementary input is reliable. Neither intuition alone determines the Blackwell ranking, which asks whether one assignment is more valuable for every subsequent decision.

This paper derives that ranking and separates two margins that are easily conflated. The first is a \emph{one-project scope comparison}: the expert either performs every task in a project or performs one task while a type-independent outside technology completes the rest. The expert therefore performs more tasks under broad responsibility. The second is a \emph{fixed-workload comparison}: the expert performs the same number of tasks under either architecture, but those tasks are bundled in one project or distributed across separate projects, each completed by an outside input. The second comparison directly measures what project-level aggregation does after the number of expert tasks has been held constant.

The distinction matters quantitatively, but not in the simple way one might expect. In the one-project comparison, broad responsibility Blackwell dominates below a sharp reliability threshold; above it the experiments are incomparable, and narrow responsibility never dominates. Holding workload fixed lowers the threshold strictly. Yet separate projects do not generally dominate bundling. For every imperfect outside technology, their best signal has the same likelihood ratio as bundled success but too little probability mass to reproduce it. Separate projects dominate only at perfect outside reliability. Between the lower bundling threshold and perfection, the two fixed-workload experiments are Blackwell incomparable.

We establish these results for any number $n\geq2$ of tasks. The fixed-workload threshold decreases strictly with $n$ and converges to zero. Thus the protection that bundling receives from very unreliable outside inputs becomes fragile as scope grows. At the same time, fragmentation does not acquire a universal informational advantage before the outside component becomes perfect. The result is a wide region in which different downstream decisions rationally prefer different task architectures.

Blackwell incomparability is a statement about universal rankings, not about the value of information in a specified problem. We therefore provide two ways to go further. First, for any strictly convex continuation value, the value of the separate-project experiment rises continuously and strictly with outside reliability and crosses the value of bundling exactly once. Second, for quadratic learning loss we give explicit formulas for posterior variance. They measure both the information lost by aggregating task outcomes into one project result and the information lost when outside failures contaminate separate project results.

The final part studies a different response to coarse performance: occasional stage-level audits. Routine cases may reveal only final performance, while a quality review, chart audit, file review, code review, or incident postmortem traces the outcome of each task. In the two-task case we derive the exact audit rate that makes broad responsibility Blackwell dominate a narrow assignment. Auditing one half of cases is sufficient for every outside reliability under symmetric, conditionally independent tasks, and the bound is sharp. We state this result only for that maintained environment; task-specific competence, correlated outcomes, or a different audit protocol need not preserve the one-half bound.

The paper is intended as a benchmark for models in which task assignment and learning interact while performance is observed at the project level. Its contribution is not that Blackwell's order can be used to rank jobs. Rather, it derives the information structures from complementary production, identifies the exact role of expert workload, and shows where a universal ranking is possible and where the organizational objective must determine the answer.

\section{Related literature}

The closest theoretical precedent is \citet{Meyer1994}, who studies optimal assignment when only team output is observed and the firm learns about several workers' abilities. Her comparison reallocates workers across teams and asks whose ability is learned more accurately. We instead keep one expert's fixed competence as the object of inference, make the complementary input type-independent, and compare project architectures generated by bundling and fragmentation. The fixed-workload theorem responds directly to the concern that a scope comparison might otherwise be driven by the number of tasks assigned to the expert.

The paper also relates to work in which job assignment and learning evolve jointly. \citet{MacDonald1982} studies job assignment with sequential information accumulation. \citet{Ortega2001} compares specialization and job rotation when firms learn both employee productivity and the profitability of activities. \citet{Pastorino2024} estimates a dynamic model in which jobs differ in informativeness, ranks those job signals by Blackwell's criterion, and shows that learning affects wages importantly through job sorting. In these papers the signal technology attached to a job is primitive or embedded in a richer dynamic environment. Here the signal technology is derived from complementary tasks and coarse project outcomes; the resulting Blackwell thresholds are the object of analysis.

Career-concerns models emphasize that information affects behavior as well as subsequent assignment. \citet{Dewatripont1999,Dewatripont1999b} compare information structures and mission scope when effort responds to career incentives. \citet{BarIsaacHorner2014} study specialization, task-specific ability, and effort, while \citet{AshworthBuenoFriedenberg2017} show how equilibrium effort can change the informativeness of performance. We hold behavior fixed in order to isolate the experiment about persistent competence. The results can therefore serve as an informational input into, rather than a substitute for, a model of endogenous effort or career incentives.

Several papers study aggregation or unified responsibility under different frictions. \citet{AshworthBueno2017} compare unified and divided political authority while observing task-level outcomes. \citet{SabacYoo2018} characterize statistically and economically sufficient aggregation of performance measures for incentive contracts. \citet{DanaSpier2018} show how bundling can discipline hidden quality choice under private consumer monitoring. Our observer instead sees a coarse final outcome, the hidden object is fixed competence, and task architecture changes the likelihood experiment even when behavior does not change. These distinctions are what produce the dominance thresholds and the fixed-workload incomparability region.

\section{Environment and three assignment architectures}

An expert has fixed type $\theta\in\{H,L\}$. The prior probability of high competence is $\mu\in(0,1)$. Conditional on type, the expert completes each assigned task correctly and independently with probability
\[
q_\theta=\begin{cases}h,&\theta=H,\\ \ell,&\theta=L,\end{cases}
\qquad 0<\ell<h<1.
\]
An outside technology succeeds with known probability $p\in[0,1]$, independently across projects and independently of the expert's type and task outcomes. The technology may be a certified provider, a protocol, a machine, or a composite of complementary inputs. Its defining feature is that its outcome contains no information about the expert.

A project succeeds only if every component assigned to it succeeds. The organization observes project success or failure. Stage outcomes remain hidden unless an audit is introduced in Section~\ref{sec:audit}. Assignment is based on public information, so the assignment itself carries no news about type.

We compare three architectures. The notation is summarized in Table~\ref{tab:architectures}; the propositions use the architecture names rather than relying on the symbols.

\begin{table}[ht]
\centering
\caption{Assignment architectures and observable performance}
\label{tab:architectures}
\small
\begin{tabularx}{\textwidth}{@{}
>{\raggedright\arraybackslash}p{0.22\textwidth}
>{\raggedright\arraybackslash}p{0.19\textwidth}
>{\raggedright\arraybackslash}p{0.22\textwidth}
>{\raggedright\arraybackslash}X@{}}
\toprule
Architecture & Expert workload & Observations & Law given accuracy $q$ \\
\midrule
Narrow project ($\Narrow_p$) & one expert task & one project result & $\mathrm{Bernoulli}(pq)$ \\
Bundled project ($\Bundled_n$) & $n$ expert tasks in one project & one project result & $\mathrm{Bernoulli}(q^n)$ \\
Separate projects ($\Separate_{n,p}$) & $n$ expert tasks across $n$ projects & $n$ project results & $\mathrm{Binomial}(n,pq)$ \\
\bottomrule
\end{tabularx}
\end{table}

In the narrow project, the expert performs one focal task and the outside technology supplies the complementary component. In the bundled project, the expert performs all $n$ complementary tasks. These two architectures hold the project fixed but not the expert's workload. In the separate-project architecture, the expert performs one task in each of $n$ projects and the outside technology supplies the complement in each. It therefore holds the number of expert tasks fixed relative to bundling, at the cost of changing how those tasks are grouped into projects.

After performance is observed, the organization may retain, promote, reassign, or otherwise act on its posterior belief. If $M_A$ is the posterior under architecture $A$ and $V$ is the optimized value of a downstream decision at belief $M_A$, then $V$ is convex and the continuation value is
\[
J_A(V)=\E[V(M_A)].
\]
Architecture $A$ Blackwell dominates architecture $A'$ exactly when $J_A(V)\geq J_{A'}(V)$ for every convex $V$ and every prior. Production and coordination enter separately in Section~\ref{sec:value}.

The proofs use a simple binary-state observation. If an experiment has two signals with likelihood ratios $U>D$, it Blackwell dominates a target experiment exactly when every target likelihood ratio lies in $[D,U]$. The reason is constructive: each target signal can be generated by mixing the high- and low-likelihood signals, and the mixing probabilities are nonnegative precisely inside that interval. The appendix records the argument.

\section{One project: the scope comparison}

For $n\geq2$, define the likelihood ratio carried by failure of a bundled project,
\[
d_n=\frac{1-h^n}{1-\ell^n},
\]
and the associated outside-reliability threshold
\begin{equation}
\bar p_n=\frac{1-d_n}{h-\ell d_n}.
\label{eq:one-threshold}
\end{equation}
The threshold lies strictly between zero and one. For two tasks it simplifies to
\[
\bar p_2=\frac{h+\ell}{1+h\ell}.
\]

\begin{proposition}\label{prop:one}
In a project with $n\geq2$ complementary tasks, broad responsibility Blackwell dominates narrow responsibility if and only if the outside reliability satisfies $p\leq\bar p_n$. If $p>\bar p_n$, the two experiments are Blackwell incomparable. Narrow responsibility never Blackwell dominates broad responsibility.
\end{proposition}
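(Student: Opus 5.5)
Both experiments are binary-signal experiments on a binary state, so I would work entirely with likelihood ratios and invoke the interval criterion recorded just before this section: an experiment with signal likelihood ratios $U>D$ dominates a target exactly when every target likelihood ratio lies in $[D,U]$. For the bundled project $\Bundled_n$, success has likelihood ratio $U_B=(h/\ell)^n$ and failure has $D_B=d_n=(1-h^n)/(1-\ell^n)$. For the narrow project $\Narrow_p$ with $p\in(0,1]$, success has ratio $ph/(p\ell)=h/\ell$ and failure has ratio $D_N(p)=(1-ph)/(1-p\ell)$. If $p=0$, narrow is uninformative, with both ratios equal to one, and is dominated trivially; the formula $D_N(0)=1$ also covers this case.

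For the dominance part, I would first note that $h/\ell\in(1,(h/\ell)^n)$ for $n\ge2$, so narrow success is always inside the bundled interval. The question is therefore only whether $D_N(p)\ge d_n$. The map $p\mapsto(1-ph)/(1-p\ell)$ is strictly decreasing, because its derivative has the sign of $\ell-h<0$. It starts at $1>d_n$ when $p=0$. Solving $(1-ph)=d_n(1-p\ell)$ gives exactly $p=(1-d_n)/(h-\ell d_n)=\bar p_n$. Hence $D_N(p)\ge d_n$ if and only if $p\le\bar p_n$, and by the criterion this gives the ``if and only if'' statement. Along the way I would verify that $0<\bar p_n<1$. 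This holds because $d_n<1$ and $h-\ell d_n>1-d_n$, which is equivalent to $d_n(1-\ell)>1-h$, that is, to $d_n>(1-h)/(1-\ell)$. I would prove this last inequality by writing $1-h^n=(1-h)\sum_{k<n} h^k$ and comparing it with the same expression in $\ell$. As a check, the $n=2$ formula then follows by algebra: $d_2=(1-h)(1+h)/((1-\ell)(1+\ell))$, and substituting gives $(h+\ell)/(1+h\ell)$.

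For the negative parts, I would show that narrow responsibility never dominates by applying the criterion in reverse. Bundled success has ratio $(h/\ell)^n>h/\ell$, which lies outside narrow's interval $[D_N(p),h/\ell]$. This needs $p>0$; at $p=0$ narrow is uninformative while bundled is informative, so narrow cannot dominate either. For $p>\bar p_n$, bundled fails to dominate because $D_N(p)<d_n$, and narrow fails to dominate by the argument just given, so the two experiments are incomparable.

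The main obstacle is making sure the criterion applies in the ``only if'' direction. Specifically, it must hold that a target ratio outside $[D,U]$ truly precludes a garbling, even when the target signal has positive probability. This is where positivity of probabilities matters, for example $p>0$ and $h,\ell<1$. I would rely on the appendix argument for this step, and the remaining work is the monotonicity computation together with the sign check $d_n>(1-h)/(1-\ell)$.
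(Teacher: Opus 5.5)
Your proposal is correct and follows the paper's proof essentially step for step. Both arguments apply the interval criterion of Lemma~\ref{lem:binary}, note that narrow success (ratio $h/\ell$) lies inside $[d_n,(h/\ell)^n]$ so that dominance reduces to $t(p)\geq d_n$, solve that inequality using the monotonicity of $t$, and rule out narrow dominance because bundled success has ratio $(h/\ell)^n>h/\ell$. Your extra checks are also correct: the explicit $p=0$ case in the reverse direction, and the verification of $0<\bar p_n<1$ via the geometric-sum factorization, which the paper carries out in the proof of Corollary~\ref{cor:thresholds}.
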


For $p>0$, bundled success has likelihood ratio $(h/\ell)^n$, whereas narrow success has likelihood ratio $h/\ell$. Broad responsibility therefore produces the stronger good news. At $p=0$, narrow performance is uninformative and bundling dominates immediately. For $p>0$, the remaining comparison turns on failure. Narrow failure has likelihood ratio
\[
t(p)=\frac{1-ph}{1-p\ell},
\]
which decreases as the outside technology becomes more reliable. Broad responsibility dominates exactly while $t(p)\geq d_n$, which is equivalent to \eqref{eq:one-threshold}. Above the threshold, bundled success is the stronger good news but narrow failure is the stronger bad news. This opposite-tail comparison yields incomparability, not a reversal of the ranking.

The result also separates production from learning. In the two-task case, narrow responsibility raises current success for both types whenever $p>h$, yet broad responsibility remains more informative for every $p\in(h,\bar p_2]$. A more reliable outside input can therefore improve current project performance while reducing the value of the project outcome for learning about the expert. Which architecture the organization chooses depends on both margins, as made explicit below.

\section{Holding the expert's workload fixed}\label{sec:workload}

The preceding comparison assigns $n$ expert tasks under broad responsibility and one under narrow responsibility. We now compare the bundled project with $n$ separate projects, so that the expert performs exactly $n$ tasks under either architecture.

Let
\begin{equation}
s_n=d_n^{1/n}
\qquad\text{and}\qquad
\widehat p_n=\frac{1-s_n}{h-\ell s_n}.
\label{eq:fixed-threshold}
\end{equation}
The separate-project experiment produces a count $K\in\{0,\ldots,n\}$ of successful projects. For $p>0$, its likelihood ratio after $k$ successes is
\begin{equation}
\left(\frac{h}{\ell}\right)^k
\left(\frac{1-ph}{1-p\ell}\right)^{n-k},
\label{eq:fragment-lr}
\end{equation}
which increases strictly with $k$. At $p=0$, every project fails under either type and the experiment is uninformative.

\begin{proposition}\label{prop:fixed}
Suppose the expert performs $n\geq2$ tasks under either architecture. Bundling all tasks in one project Blackwell dominates distributing them across $n$ separate projects if and only if $p\leq\widehat p_n$. The two architectures are Blackwell incomparable for $\widehat p_n<p<1$. At $p=1$, separate projects Blackwell dominate bundling.
\end{proposition}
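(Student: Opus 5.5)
The plan is to use the binary-state observation from the environment section in one direction, and a direct garbling argument in the other. Bundling is a two-signal experiment. Success has likelihood ratio $U=(h/\ell)^n$ and failure has likelihood ratio $D=d_n$, with $U>1>D$. So bundling dominates separate projects exactly when every likelihood ratio in \eqref{eq:fragment-lr} lies in $[d_n,(h/\ell)^n]$.

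At $p=0$ the separate-project experiment is uninformative, so bundling dominates trivially. For $p>0$, \eqref{eq:fragment-lr} increases strictly in $k$, so only the two extremes matter.
\begin{itemize}
\item At $k=n$ the ratio is exactly $(h/\ell)^n=U$, which is admissible.
\item At $k=0$ the ratio is $t(p)^n$, with $t(p)=(1-ph)/(1-p\ell)$ as in Proposition~\ref{prop:one}.
\end{itemize}
Dominance is therefore equivalent to $t(p)^n\ge d_n$, that is, $t(p)\ge s_n$. Since $t$ is continuous and strictly decreasing in $p$, with $t(0)=1>s_n$, solving $t(p)=s_n$ gives $p=(1-s_n)/(h-\ell s_n)=\widehat p_n$. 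Hence $t(p)\ge s_n$ if and only if $p\le\widehat p_n$, which proves the ``if and only if'' part. I would also check that $\widehat p_n\in(0,1)$. This follows from $t(1)=(1-h)/(1-\ell)<s_n$, which is equivalent to $(1-h)^n/(1-\ell)^n<(1-h^n)/(1-\ell^n)$; I expect this to be a routine monotonicity fact, and it is needed for the incomparability interval to be nonempty.

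For incomparability when $\widehat p_n<p<1$, the first part already shows that bundling does not dominate. It remains to show that separate projects do not dominate bundling, and here the two-signal criterion does not apply directly, so I would argue via garbling. Suppose a garbling $\gamma_k\in[0,1]$ (the probability of reporting ``bundled success'' after $k$ successes) reproduces bundling. Then
\[
\sum_k P_\theta(K=k)\gamma_k=h^n \text{ or } \ell^n \quad (\theta=H,L).
\]
Combining the two type equations, the reproduced signal must have likelihood ratio $U$, which gives
\[
\sum_k P_L(K=k)\gamma_k\,(\mathrm{LR}_k-U)=0 .
\]
Every term is nonpositive, and $\mathrm{LR}_k<U$ strictly for $k<n$ because $t(p)<1<h/\ell$ when $p>0$. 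Hence $\gamma_k=0$ for all $k<n$. Then
\[
h^n=\gamma_n (ph)^n\le p^n h^n<h^n,
\]
a contradiction. This step formalizes the introduction's remark that the best separate-project signal has the right likelihood ratio but too little mass. I expect it to be the main obstacle, mainly in stating it cleanly, since the other steps reduce to the recorded criterion.

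At $p=1$, $K\sim\mathrm{Binomial}(n,q)$, and the bundled outcome is the deterministic garbling $\mathbf 1\{K=n\}$, which has law $\mathrm{Bernoulli}(q^n)$. So separate projects dominate. This completes the three claims.
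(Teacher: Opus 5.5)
Your proposal is correct and follows the paper's proof essentially step by step. You apply Lemma~\ref{lem:binary} to the extreme counts $k=n$ and $k=0$ to get the threshold. For incomparability you observe that only the all-success count can feed a reproduced bundled success, which then has too little mass when $p<1$; your weighted-sum identity simply makes the paper's one-line argument explicit. At $p=1$ you use the garbling $\mathbf{1}\{K=n\}$. The check $\widehat p_n<1$ that you leave as routine is done in the paper's proof of Corollary~\ref{cor:thresholds}, via $d_n>(1-h)/(1-\ell)$ and $s_n>d_n$.
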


The proof exposes why neither task count nor signal count alone determines the ranking. The least favorable separate-project signal is failure in all $n$ projects, with likelihood ratio $t(p)^n$. Bundling dominates exactly when this ratio is no smaller than bundled failure's ratio $d_n$, giving $t(p)\geq d_n^{1/n}$ and hence $p\leq\widehat p_n$.

In the other direction, success in every separate project has the same likelihood ratio $(h/\ell)^n$ as bundled success. But when $p<1$, it occurs with probability $p^n q^n$, while bundled success occurs with probability $q^n$. No lower-ranked separate-project signal can be added without reducing the likelihood ratio. Separate projects therefore lack enough highest-likelihood mass to reproduce bundled success. At $p=1$, this mass deficit disappears and the separate-project count reveals every expert task outcome; bundled performance is then obtained simply by reporting success only when all $n$ tasks succeed.

\begin{corollary}\label{cor:thresholds}
For every $n\geq2$,
\[
0<\widehat p_n<\bar p_n<1.
\]
The fixed-workload threshold $\widehat p_n$ decreases strictly with $n$, converges to zero, and satisfies
\[
\widehat p_n\sim\frac{h^n}{n(h-\ell)}
\qquad\text{as }n\to\infty.
\]
The one-project threshold satisfies $\bar p_n\sim h^n/(h-\ell)$.
\end{corollary}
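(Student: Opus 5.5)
Both thresholds have the same form $\phi(x)=(1-x)/(h-\ell x)$, evaluated at $x=d_n$ for $\bar p_n$ and at $x=s_n=d_n^{1/n}$ for $\widehat p_n$. So the plan is to reduce everything to properties of $\phi$ and of the two inputs. Note that $\phi$ is the inverse of the failure ratio: $t(p)=x$ exactly when $p=\phi(x)$. Since $t$ decreases from $t(0)=1$ to $t(1)=(1-h)/(1-\ell)$, $\phi$ is strictly decreasing on the relevant range. One can also check this directly: $\phi'(x)=(\ell-h)/(h-\ell x)^2<0$.

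\textbf{Step 1: range of the inputs.} I will show $(1-h)/(1-\ell)<d_n<s_n<1$.
\begin{itemize}
\item $d_n<1$ because $h^n>\ell^n$.
\item $s_n>d_n$ because $0<d_n<1$ and $n\ge2$.
\item $d_n>(1-h)/(1-\ell)$ is equivalent to $(1-h^n)/(1-h)>(1-\ell^n)/(1-\ell)$. This holds because $1+x+\dots+x^{n-1}$ is increasing in $x$.
\item $s_n>(1-h)/(1-\ell)$ then follows, since $s_n>d_n$.
\end{itemize}
Applying the decreasing bijection $\phi$, which maps $((1-h)/(1-\ell),1)$ onto $(0,1)$, gives $0<\widehat p_n<\bar p_n<1$.

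\textbf{Step 2: monotonicity and limit of $\widehat p_n$.} Because $\phi$ is decreasing, $\widehat p_n$ decreases strictly in $n$ if and only if $s_n$ increases strictly in $n$. Write
\[
\log s_n=\frac{1}{n}\bigl[\log(1-h^n)-\log(1-\ell^n)\bigr].
\]
I expect this to be the main obstacle. Comparing $s_n$ with $s_{n+1}$ directly seems messy, so I would treat $n$ as a continuous variable $x\ge 2$ and show that $g(x)=x^{-1}[\log(1-h^x)-\log(1-\ell^x)]$ is increasing.
\begin{itemize}
\item The bracket $F(x)=\log(1-h^x)-\log(1-\ell^x)$ is negative, and its magnitude decreases to $0$.
\item The sign of $g'$ is the sign of $xF'(x)-F(x)$. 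Since $-F>0$, it suffices to show $F'(x)\ge0$.
\item Now $F'(x)=-\frac{h^x\log h}{1-h^x}+\frac{\ell^x\log\ell}{1-\ell^x}$. Its sign is that of $\psi(h)-\psi(\ell)$, where $\psi(y)=y^x\lvert\log y\rvert/(1-y^x)$.
\item Substitute $u=y^x$, so $\lvert\log y\rvert=\lvert\log u\rvert/x$. Then $\psi=\frac1x\cdot u\lvert\log u\rvert/(1-u)$.
\item It remains to verify that $u\lvert\log u\rvert/(1-u)$ is increasing on $(0,1)$. Its limit is $0$ at $u\to 0$ and $1$ at $u\to1$; I would confirm monotonicity by a derivative check.
\end{itemize}
This gives $s_n$ strictly increasing. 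Since $F\to0$, also $\log s_n\to0$, so $s_n\to1$ and $\widehat p_n\to\phi(1)=0$. If the $u$-monotonicity fails somewhere, my fallback is a direct discrete comparison $d_{n+1}^{n}>d_n^{n+1}$, proved via the expansion in Step 3 for large $n$ plus a separate check for small $n$. That fallback is less clean.

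\textbf{Step 3: asymptotics.} Near $x=1$, $\phi(x)=(1-x)/(h-\ell x)\sim(1-x)/(h-\ell)$.
\begin{itemize}
\item Expand $1-d_n=(h^n-\ell^n)/(1-\ell^n)\sim h^n$, because $\ell^n=o(h^n)$. This gives $\bar p_n\sim h^n/(h-\ell)$.
\item Similarly, $\log s_n=\frac1n\bigl[-h^n+\ell^n+O(h^{2n})\bigr]$. Hence $1-s_n\sim h^n/n$, and therefore $\widehat p_n\sim h^n/(n(h-\ell))$.
\end{itemize}
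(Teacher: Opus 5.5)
Your argument is correct. Steps 1 and 3 coincide with the paper's proof: the geometric-sum bound $d_n>(1-h)/(1-\ell)$, the strictly decreasing map $\phi$ with $\phi((1-h)/(1-\ell))=1$ and $\phi(1)=0$, and the expansions $1-d_n\sim h^n$ and $1-s_n\sim h^n/n$.

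The genuine difference is how you prove that $s_n$ increases. The paper expands both logarithms to write $\log s_n=-\sum_{j\ge1}\int_\ell^h x^{nj-1}\,dx$. It then reads off monotonicity term by term, since each integrand falls strictly in $n$ on $(0,1)$. This takes one line, and its $j=1$ term also gives the leading asymptotics.

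You instead write $\log s_n=F(n)/n$ with $F=\log d_n<0$, and show that $F$ is increasing. The derivative check you left open does go through: $\chi(u)=-u\log u/(1-u)$ has $\chi'(u)=(u-1-\log u)/(1-u)^2>0$, so the fallback is unnecessary. The continuous-variable detour could also be replaced by the identity $(1-y^{n+1})/(1-y^n)=1+1/\sum_{k=1}^n y^{-k}$. This ratio is increasing in $y$, which gives $d_{n+1}>d_n$ directly.

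What your route buys is a stronger intermediate fact: $d_n$ itself increases strictly in $n$. So bundled failure becomes less diagnostic with scope, and $\bar p_n=\phi(d_n)$ also decreases strictly in $n$. The paper states the first fact in its discussion and shows the second in its table, but its series argument proves neither. The reason is that the terms $h^{nj}-\ell^{nj}$ in the expansion of $\log d_n$ need not be monotone in $n$. You also obtain $\widehat p_n\to0$ explicitly, rather than leaving it implicit in the asymptotic formula.
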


Holding workload fixed therefore shrinks the region in which bundling is universally more informative. It does not eliminate that region and it does not generally make separate projects superior. The two thresholds give an exact decomposition:
\begin{itemize}
\item if $p\leq\widehat p_n$, bundling dominates both one narrow project and the fixed-workload portfolio;
\item if $\widehat p_n<p\leq\bar p_n$, bundling dominates one narrow project but is incomparable with the fixed-workload portfolio;
\item if $\bar p_n<p<1$, bundling is incomparable with both alternatives;
\item if $p=1$, separate projects dominate bundling, while one narrow project remains incomparable with it.
\end{itemize}

The monotonicity in scope has a direct interpretation. As $n$ grows, a bundled failure becomes less diagnostic because more tasks could have caused it. An increasingly unreliable outside component is then required for one bundled outcome to dominate the larger collection of separate project outcomes. The asymptotic ratio $\widehat p_n/\bar p_n\sim1/n$ quantifies how much the unequal-workload comparison enlarges the dominance region at broad scopes.

\section{How costly is coarse aggregation?}\label{sec:value}

Blackwell's order deliberately avoids specifying the decision taken after learning. In the incomparability region, however, a particular organization does have a particular continuation problem. This section shows how to quantify the information gap without claiming a universal ranking.

First note that separate-project performance becomes more informative as outside reliability rises. If $0\leq p_1<p_2\leq1$, independently retain each success generated under $p_2$ with probability $p_1/p_2$. The retained outcomes have success probability $p_1q$, so this thinning operation converts the $p_2$ experiment into the $p_1$ experiment. Thus higher $p$ is a strict Blackwell improvement for the separate-project architecture.

\begin{proposition}\label{prop:value}
Fix an interior prior and a strictly convex continuation value $V$. There is a unique reliability $p_V\in(\widehat p_n,1)$ such that bundling gives greater continuation value for $p<p_V$, separate projects give greater continuation value for $p>p_V$, and their values are equal at $p=p_V$.
\end{proposition}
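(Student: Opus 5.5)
The argument studies the difference
\[
\Delta(p)=J_{\Separate_{n,p}}(V)-J_{\Bundled_n}(V).
\]
The bundled value does not depend on $p$, so the claim follows from four properties of $\Delta$: it is continuous, it is strictly increasing in $p$, it is nonpositive at $\widehat p_n$, and it is strictly positive at $p=1$. The intermediate value theorem then gives a unique zero $p_V$. Strict monotonicity places $p_V$ in the interval $(\widehat p_n,1)$ and delivers the sign pattern on either side.

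\emph{Continuity.} The separate-project posterior after $k$ successes is a rational function of $p$. Its denominator is the marginal probability of $k$, which is positive for $p>0$. The probabilities of the counts are polynomials in $p$. At $p=0$ the experiment puts mass one on $k=0$ with posterior $\mu$, and the limit as $p\downarrow0$ matches this. Since $V$ is convex, it is continuous on $(0,1)$, and the posteriors stay in $(0,1)$ because the prior is interior and $h,\ell\in(0,1)$. Continuity of $J_{\Separate_{n,p}}(V)$ in $p$ follows.

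\emph{Strict monotonicity.} The thinning argument above already shows that for $p_1<p_2$ the $p_1$ experiment is a garbling of the $p_2$ experiment, so $J$ is weakly increasing. For strictness with strictly convex $V$, I use the following fact. If a garbling maps a more informative experiment to a less informative one, Jensen's inequality holds with equality only if every garbled signal's posterior is generated by signals that share that same posterior. Under thinning, the $p_1$-signal $k$ receives positive weight from $p_2$-counts $j\geq k$. For $k<n$ this includes at least two values of $j$, and their likelihood ratios differ strictly because $h/\ell\neq t(p_2)$ when $p_2>0$. The posteriors therefore differ, and strict convexity makes the inequality strict. This mixing step is the part to handle carefully, including the boundary case $p_1=0$.

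\emph{Endpoints.} At $p=\widehat p_n$, Proposition~\ref{prop:fixed} says bundling Blackwell dominates separate projects, so $\Delta(\widehat p_n)\le0$. Strict monotonicity then gives $\Delta<0$ on $[0,\widehat p_n)$. At $p=1$, separate projects dominate bundling, and bundling is a garbling: report success exactly when $K=n$. I then need strictness, and I expect this to be the main obstacle. The garbled failure signal pools the counts $k=0,\dots,n-1$. When $p=1$ these have distinct likelihood ratios $(h/\ell)^k\big((1-h)/(1-\ell)\big)^{n-k}$, each with positive probability under an interior prior. Because $n\geq2$, failure pools at least two distinct posteriors, so strict convexity gives $\Delta(1)>0$. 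With $\Delta$ continuous and strictly increasing, nonpositive at $\widehat p_n$ and positive at $1$, there is a unique $p_V$ with $\Delta(p_V)=0$. If the value at $\widehat p_n$ were exactly zero, strict monotonicity would still force $p_V>\widehat p_n$ after checking $\Delta(\widehat p_n)<0$. That inequality holds by the same strict Jensen argument: the bundled experiment strictly dominates the separate-project experiment at $\widehat p_n$, since the two are not equivalent.
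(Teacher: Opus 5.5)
Your proposal is correct and follows essentially the same route as the paper: thinning plus strict Jensen gives strict monotonicity in $p$, then continuity, strict superiority of bundling at $\widehat p_n$ and of separate projects at $p=1$ (bundled failure pools distinct counts), and hence a unique crossing. The one step you assert rather than check is that the two experiments are not equivalent at $\widehat p_n$. The paper settles this by noting that an intermediate count $0<k<n$ has likelihood ratio strictly between $d_n$ and $(h/\ell)^n$, so it must mix both bundled posteriors.
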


This proposition reconciles the incomparability result with organizational choice. The Blackwell threshold $\widehat p_n$ is the largest reliability below which \emph{every} continuation problem prefers bundling. Beyond it, the preferred architecture depends on what the organization will do with the information. As reliability rises, the value of separate project outcomes crosses the fixed value of bundled performance once for any strictly convex problem.

For an explicit measure, suppose the organization values accurate posterior prediction under quadratic loss. The information produced by architecture $A$ is the variance of its posterior,
\[
I_A(\mu)=\E[(M_A-\mu)^2].
\]
For bundling, let $m_B=\mu h^n+(1-\mu)\ell^n$. Then
\begin{equation}
I_{\Bundled_n}(\mu)
=\mu^2(1-\mu)^2
\frac{(h^n-\ell^n)^2}{m_B(1-m_B)}.
\label{eq:quadratic-bundle}
\end{equation}
For separate projects, define
\[
f_q(k;p)=\binom{n}{k}(pq)^k(1-pq)^{n-k},
\qquad
m_k(p)=\mu f_h(k;p)+(1-\mu)f_\ell(k;p).
\]
The corresponding information is
\begin{equation}
I_{\Separate_{n,p}}(\mu)
=\mu^2(1-\mu)^2
\sum_{k=0}^{n}
\frac{[f_h(k;p)-f_\ell(k;p)]^2}{m_k(p)}.
\label{eq:quadratic-fragment}
\end{equation}
Terms with zero probability are interpreted by continuity.

Equations \eqref{eq:quadratic-bundle}--\eqref{eq:quadratic-fragment} answer two distinct ``how much'' questions. The quantity
\[
I_{\Separate_{n,1}}(\mu)-I_{\Bundled_n}(\mu)>0
\]
is the information lost by compressing the expert's $n$ task outcomes into one pass/fail result. The quantity
\[
I_{\Separate_{n,1}}(\mu)-I_{\Separate_{n,p}}(\mu)>0
\qquad\text{for }0\leq p<1
\]
is the information lost when failures of imperfect outside components contaminate the $n$ separate project outcomes. At $p=1$ this difference is zero. The unique quadratic cutoff $p_Q(\mu)$ is obtained from the one-dimensional equation $I_{\Separate_{n,p}}(\mu)=I_{\Bundled_n}(\mu)$ and lies strictly between $\widehat p_n$ and one.

Table~\ref{tab:numerical} illustrates the magnitudes for $h=0.8$, $\ell=0.4$, and prior $\mu=1/2$. The two universal thresholds fall with scope, but the fixed-workload threshold falls much faster. For two tasks, bundling dominates the separate-project portfolio up to $p=0.642$, while a quadratic decision continues to prefer bundling until $p=0.918$. Even a reliability of $p=0.8$ leaves separate projects less informative for that decision: their posterior variance is $0.0465$, compared with $0.0600$ under bundling. At perfect outside reliability it rises to $0.0720$, so compressing the two expert outcomes into one bundled pass/fail result discards one sixth of the available posterior variance. In this calibration, task-level fragmentation becomes correspondingly more valuable as scope grows when the outside component is reliable, and the decision-specific crossing moves left.

\begin{table}[ht]
\centering
\caption{Illustrative thresholds and posterior variance}
\label{tab:numerical}
\small
\begin{tabular}{@{}rrrrr@{}}
\toprule
$n$ & $\widehat p_n$ & $\bar p_n$ & $p_Q(1/2)$ & $I_{\Bundled_n}(1/2)$ / $I_{\Separate_{n,1}}(1/2)$ \\
\midrule
2 & 0.642 & 0.909 & 0.918 & 0.060 / 0.072 \\
3 & 0.408 & 0.809 & 0.774 & 0.061 / 0.096 \\
4 & 0.263 & 0.707 & 0.607 & 0.054 / 0.116 \\
5 & 0.173 & 0.607 & 0.450 & 0.045 / 0.133 \\
6 & 0.116 & 0.514 & 0.324 & 0.036 / 0.147 \\
\bottomrule
\end{tabular}

\medskip
\begin{minipage}{0.92\textwidth}
\footnotesize\emph{Notes:} $h=0.8$, $\ell=0.4$, and $\mu=1/2$. The quadratic cutoff solves $I_{\Separate_{n,p}}(1/2)=I_{\Bundled_n}(1/2)$. Values are rounded to three decimals.
\end{minipage}
\end{table}

Production and coordination can now be added without treating informativeness as welfare. Let $\Pi_A$ be expected current production under architecture $A$, let $c_A$ collect outsourcing, handoff, and coordination costs, and let $\delta$ weight continuation learning. The organization chooses bundling over separate projects exactly when
\begin{equation}
[\Pi_{\Bundled_n}-c_{\Bundled_n}]
-[\Pi_{\Separate_{n,p}}-c_{\Separate_{n,p}}]
\geq
\delta[J_{\Separate_{n,p}}(V)-J_{\Bundled_n}(V)].
\label{eq:organization-choice}
\end{equation}
Coarse project information is economically relevant when bundling saves handoffs, preserves continuity or confidentiality, exploits economies of scope, or avoids costly outside provision. Equation \eqref{eq:organization-choice} states the tradeoff without building those application-specific production forces into the information theorem.

\section{Occasional task-level attribution}\label{sec:audit}

We return to the two-task, one-project comparison and allow stage-level attribution on a random subset of cases. With probability $\rho$, an audit reveals whether each task was completed correctly; without an audit, only project success or failure is observed. The audit event is independent of type and performance and occurs under either assignment.

This protocol describes settings in which ordinary output is coarse but selected cases receive a deeper review. Examples include medical chart audits that distinguish diagnostic from treatment errors, legal-file reviews that trace research and execution, model-validation exercises that inspect individual components, software incident postmortems, and quality-control checks of a provider's workflow. An audit is informative about the expert's task; the outside task remains type-independent.

Let $\bar p_2=(h+\ell)/(1+h\ell)$ be the unaudited threshold from Proposition~\ref{prop:one}, and define
\begin{equation}
p_A(\rho)=
\begin{cases}
\displaystyle \bar p_2+\frac{\rho}{1-\rho}(1-\bar p_2),&0\leq\rho<\frac12,\\[8pt]
1,&\frac12\leq\rho\leq1.
\end{cases}
\label{eq:audit-frontier}
\end{equation}

\begin{proposition}\label{prop:audit}
In the two-task environment with the common audit protocol, audited broad responsibility Blackwell dominates audited narrow responsibility if and only if $p\leq p_A(\rho)$. The experiments are incomparable otherwise, and audited narrow responsibility never dominates audited broad responsibility.
\end{proposition}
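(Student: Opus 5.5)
The plan is to reduce everything to the binary-state geometry used for Propositions~\ref{prop:one} and~\ref{prop:fixed}. The one change is that the audited broad experiment has more than two distinct likelihood ratios, so the two-signal interval criterion from the appendix is replaced by its general form. In that form, an experiment $A$ Blackwell dominates $A'$ exactly when the concave, piecewise-linear ``ROC'' curve of $A$ lies weakly above that of $A'$. This curve is traced in the $(P_L,P_H)$ plane by adding signals in decreasing order of likelihood ratio, and it suffices to check that every vertex of the curve of $A'$ lies under the curve of $A$. Equivalently, $\sum_s \max\{P_H(s)-cP_L(s),0\}$ is weakly larger under $A$ for every $c\geq0$.

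\emph{Step 1: list the signals.} For audited broad responsibility, with probability $1-\rho$ we observe project success, with ratio $(h/\ell)^2$ and $P_\theta=q_\theta^2$, or failure, with ratio $d_2$. With probability $\rho$ the audit reveals the number of correct tasks $k\in\{0,1,2\}$, with ratios $(h/\ell)^2$, $h(1-h)/[\ell(1-\ell)]$ and $[(1-h)/(1-\ell)]^2$. Audited successes and unaudited successes have the same ratio, so they pool. For audited narrow responsibility, the unaudited signals have ratios $h/\ell$ and $t(p)$. The audit reveals the expert's task, with ratios $h/\ell$ and $(1-h)/(1-\ell)$; the outside task's outcome is type-independent and can be dropped. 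Since $(1-h)/(1-\ell)\leq t(p)$, the narrow ROC curve has exactly two interior vertices: $v_1=(\ell(\rho+(1-\rho)p),\,h(\rho+(1-\rho)p))$ after the pooled success, and $v_2=(\rho\ell+1-\rho,\,\rho h+1-\rho)$ after adding unaudited failure.

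\emph{Step 2: the easy directions.} The highest broad ratio $(h/\ell)^2$ exceeds every narrow ratio, and the lowest broad ratio $[(1-h)/(1-\ell)]^2$ is below every narrow ratio. A dominating experiment must have ratio support spanning the dominated one's extremes, so narrow never dominates broad. Hence, whenever broad fails to dominate, the two experiments are incomparable.

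\emph{Step 3: the frontier, which is the main work.} I would check $v_1$ and $v_2$ against the broad curve. For $v_1$, the broad curve starts with slope $(h/\ell)^2$ over mass $h^2$, which lies above the ray of slope $h/\ell$; I expect $v_1$ to be slack throughout. The binding constraint should be $v_2$. The main obstacle is that its position relative to the broad curve depends on how the broad middle ratios $h(1-h)/[\ell(1-\ell)]$ and $d_2$ are ordered, so the relevant broad segment must be identified. I would verify that $v_2$ falls on the segment where the broad curve's lower part is governed by audited two-failure mass. Writing the supporting-line condition at $v_2$ then gives an inequality linear in $p$ with coefficients in $\rho$. At $\rho=0$ it must reduce to $t(p)\geq d_2$, which gives $p\leq\bar p_2$ and serves as a check. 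Solving it should give $p\leq\bar p_2+\frac{\rho}{1-\rho}(1-\bar p_2)$. This bound reaches $1$ exactly at $\rho=1/2$, and for $\rho\geq1/2$ the $v_2$ condition holds even at $p=1$, which gives $p_A(\rho)=1$.

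\emph{Step 4: sharpness.} For $p>p_A(\rho)$ with $\rho<1/2$, $v_2$ lies strictly above the broad curve, so dominance fails. I would exhibit the violated support function $c$ explicitly, i.e.\ a two-action decision problem with cutoff belief at the relevant kink, and conclude incomparability by Step~2.
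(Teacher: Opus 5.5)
Your ROC criterion and Steps 1--2 are sound. The vertices $v_1=K(\ell,h)$ with $K=\rho+(1-\rho)p$ and $v_2=(1-\rho(1-\ell),\,1-\rho(1-h))$ are correct. Narrow can never dominate because its top ratio $h/\ell$ is below $(h/\ell)^2$. The gap is in Step 3: you have the binding vertex backwards. By your own formula, $v_2$ does not depend on $p$, so no condition at $v_2$ can yield an inequality in $p$. At $\rho=0$ the point $v_2$ is $(1,1)$, which constrains nothing, so your consistency check $t(p)\geq d_2$ cannot come out of it. In fact $v_2$ is always attainable in the broad experiment. In audited cases broad reveals the first task's outcome, so the test that rejects exactly when an audit shows that task failed has rejection mass $\rho(1-\ell)$ under $L$ and $\rho(1-h)$ under $H$. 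Hence $v_2$ is never binding. The ordering you flag as the main obstacle is also fixed: $h(1-h)/[\ell(1-\ell)]>d_2$ always, because $h/\ell>(1+h)/(1+\ell)$.

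Conversely, $v_1$ is not slack. The broad curve rises faster than the ray of slope $h/\ell$ only on its first segment, of $P_L$-length $\ell^2$. After that all its slopes are below $h/\ell$, and it ends at $(1,1)$, which is below the ray. So the ray exits the broad curve at a unique point $a_{\max}(\ell,h)$. For example, at $\rho=0$ and $p=1$ the vertex $v_1=(\ell,h)$ lies strictly above the broad curve, which is exactly why $\bar p_2<1$.

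The missing computation is this exit point.
\begin{itemize}
\item For $\rho<1/2$, it lies on the broad segment of slope $d_2$. It is reached after double success, all audited one-success outcomes, and a fraction $(1-2\rho)h\ell/[(1-\rho)(1+h\ell)]$ of unaudited failures. This gives $a_{\max}=\bar p_2+2\rho(1-\bar p_2)$.
\item For $\rho\geq1/2$, double success plus a fraction $1/(2\rho)$ of audited one-success outcomes already has mass $q$, so $a_{\max}=1$.
\end{itemize}
The frontier is then $K\leq a_{\max}$, which rearranges to \eqref{eq:audit-frontier}.

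For sharpness in Step 4, use the support function at $c=d_2$ evaluated at $v_1$. For $K>a_{\max}$ it gives $K(h-d_2\ell)>a_{\max}(h-d_2\ell)=\sum_s\max\{P_H(s)-d_2P_L(s),0\}$ under broad. Nothing at $v_2$ can serve as the witness.

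This is the paper's argument: a fractional-knapsack bound on the narrow high signal for necessity, and an explicit garbling for sufficiency. Once you use the correct vertex, your ROC vertex check would legitimately replace the explicit garbling construction.
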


When $p>\bar p_2$, the minimum audit rate that restores dominance is
\begin{equation}
\rho^*(p)=\frac{p-\bar p_2}{1+p-2\bar p_2}.
\label{eq:min-audit}
\end{equation}
It rises from zero at the unaudited threshold to one half at $p=1$. Consequently, auditing one half of cases makes broad responsibility more informative for every outside reliability.

The half bound is a counting fact specific to the symmetric independent-task environment. A one-task success signal has probability $q=q^2+q(1-q)$. Bundled successes provide mass $q^2$, and audited cases with exactly one success provide mass $2\rho q(1-q)$. At $\rho=1/2$ these pieces reproduce the one-task signal. The bound is sharp at $p=1$, where narrow project performance is already a one-task signal. If tasks have different type-dependent accuracies, their outcomes are correlated, or audits reveal something other than both task outcomes, the cancellation need not survive.

The audit theorem and the fixed-workload theorem address different margins. Proposition~\ref{prop:fixed} asks what is learned when the same expert workload is spread over multiple ordinary projects. Proposition~\ref{prop:audit} asks how selective within-project attribution changes the comparison between broad and narrow responsibility in one project. An organization may use both margins, but neither should be interpreted as the other.

\section{Scope and limitations}

The analysis deliberately fixes behavior. If effort changes with scope or monitoring, then an assignment changes both the statistical experiment conditional on actions and the equilibrium distribution of actions. The career-concerns and multitask-incentive literatures study that additional channel. The thresholds here remain useful as a benchmark for the information technology that those models would endogenize.

Conditional independence and common task accuracy deliver closed forms. With stage-specific accuracies, bundled success and failure still define two endpoint likelihood ratios, and the comparison lemma continues to apply, but the fixed-workload portfolio is no longer binomial. Correlation across expert task outcomes changes both extremal likelihood ratios and can reverse comparative statics. These are substantive extensions rather than innocuous relabelings.

The outside reliability is known and type-independent. For a single project, a reliability draw independent of the expert affects the experiment only through its mean. Across separate projects, the same reduction remains valid if reliability draws are independent across projects. A common persistent but unobserved provider quality would correlate project outcomes and create a second object of inference; the fixed-workload theorem would then no longer apply without modification.

Finally, we do not claim that organizations choose coarse performance solely to learn. Coarse outcomes may be technologically unavoidable, or stage-level verification may be costly. Bundling may save coordination and handoff costs even while aggregating evidence. The paper identifies the learning term in that broader choice, gives an exact workload correction, and shows how audits or a specified continuation problem can complete the comparison.

\section{Conclusion}

Task architecture changes what project performance reveals about competence. In one project, assigning all complementary tasks to the expert is universally more informative only when the outside alternative is sufficiently unreliable. Holding the expert's workload fixed lowers that threshold but leaves a nonempty region in which bundling still dominates. Separate projects dominate only with a perfect outside component; otherwise an intermediate reliability region is genuinely incomparable.

The $n$-task formulation shows that the workload objection is important but not fatal. Unequal workload enlarges the bundling-dominance region by an asymptotic factor of $n$, while the underlying aggregation effect survives after workload is fixed. Posterior-variance formulas quantify the loss for a specified learning objective, and the audit frontier shows how occasional attribution can restore a universal ranking in the two-task case. These benchmarks can be inserted into richer models of effort, career concerns, coordination, and optimal assignment whenever the organization observes projects more clearly than the tasks inside them.

\appendix
\section{Proofs}\label{app:proofs}

\begin{lemma}\label{lem:binary}
Let an experiment about two states have two signals with finite likelihood ratios $U>D$. It Blackwell dominates a target experiment if and only if every positive-probability signal of the target has likelihood ratio in $[D,U]$.
\end{lemma}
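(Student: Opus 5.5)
We prove both directions through the garbling characterization of the Blackwell order. With two states, a garbling of a two-signal experiment is a Markov kernel sending each of its signals to the target signals. Write the dominating experiment's signals as $u$ and $d$, with $P_H(u)=a$, $P_L(u)=b$, $P_H(d)=1-a$ and $P_L(d)=1-b$. Then $U=a/b>D=(1-a)/(1-b)$, and $U>D$ forces $a>b$. A target experiment has signals $y$ with probabilities $\pi_H(y)$ and $\pi_L(y)$.

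For sufficiency, suppose every target likelihood ratio lies in $[D,U]$. We look for nonnegative weights $\alpha_y=P(y\mid u)$ and $\beta_y=P(y\mid d)$ solving
\[
a\alpha_y+(1-a)\beta_y=\pi_H(y),\qquad b\alpha_y+(1-b)\beta_y=\pi_L(y).
\]
The determinant is $a-b>0$, so each such linear system has a unique solution. Cramer's rule gives
\[
\alpha_y=\frac{(1-b)\pi_H(y)-(1-a)\pi_L(y)}{a-b},
\]
which is nonnegative exactly when $\pi_H(y)/\pi_L(y)\ge D$. Similarly,
\[
\beta_y=\frac{a\pi_L(y)-b\pi_H(y)}{a-b},
\]
which is nonnegative exactly when the ratio is at most $U$. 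Summing the equations over $y$ gives $a\sum\alpha+(1-a)\sum\beta=1$ and $b\sum\alpha+(1-b)\sum\beta=1$. By uniqueness, $\sum_y\alpha_y=\sum_y\beta_y=1$. So $(\alpha,\beta)$ is a valid kernel, and the target is a garbling.

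Two edge cases need handling. Zero-probability target signals can be dropped. A target signal with $\pi_L(y)=0$ has an infinite likelihood ratio, which lies outside $[D,U]$ because $U$ is finite, so it is excluded by hypothesis and never needs a construction. The remaining interpretive point is to read "finite likelihood ratios" as also requiring $D>0$, or else to allow $D=0$ with the convention above. We check that the formulas still yield nonnegativity in that case.

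For necessity, suppose a garbling $(\alpha,\beta)$ exists. Then each target signal satisfies $\pi_H(y)=a\alpha_y+(1-a)\beta_y$ and $\pi_L(y)=b\alpha_y+(1-b)\beta_y$. Whenever $\pi_L(y)>0$, the ratio $\pi_H(y)/\pi_L(y)$ is a weighted mediant of $U$ and $D$, with nonnegative weights $b\alpha_y$ and $(1-b)\beta_y$ that are not both zero. It therefore lies in $[D,U]$. If instead $\pi_L(y)=0$ and $\pi_H(y)>0$, the signal would need $\alpha_y=\beta_y=0$ (given $b\in(0,1)$), which contradicts $\pi_H(y)>0$.

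Finally, we pass between garbling and the convex-value definition used in the text via Blackwell's theorem. Directly, garbling implies that the target's posterior is a mean-preserving contraction of the dominating experiment's posterior, and Jensen's inequality then gives $J\ge J'$ for every convex $V$ and every prior. For the converse we cite Blackwell's theorem. I expect the only genuine care to be needed in these boundary conventions, namely zero probabilities, $D=0$, and the $p=0$ cases used later. The algebra itself is routine.
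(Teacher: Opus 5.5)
Your proof is correct and follows the paper's argument. Both solve a $2\times2$ linear system for each target signal, note that the solution is nonnegative exactly when the likelihood ratio lies in $[D,U]$, sum over signals to get a valid kernel, and obtain necessity because a garbling averages the source likelihood ratios; your explicit Cramer formulas, boundary cases, and appeal to Blackwell's theorem (to link garbling with the convex-value definition) only spell out what the paper's proof leaves implicit.
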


\begin{proof}
Necessity follows because a garbling averages source likelihood ratios. For sufficiency, write the source probabilities under the low state as $(a,1-a)$ and under the high state as $(Ua,D(1-a))$. For a target signal $y$, write its low-state probability as $b_y$ and its likelihood ratio as $r_y\in[D,U]$. Choose probabilities $x_y$ and $z_y$ of reporting $y$ after the source's high- and low-likelihood signals to solve
\[
ax_y+(1-a)z_y=b_y,
\qquad
Uax_y+D(1-a)z_y=r_yb_y.
\]
The solution is nonnegative exactly when $r_y\in[D,U]$. Summing the two equations over $y$ gives $\sum_yx_y=\sum_yz_y=1$, so these solutions form a garbling.
\end{proof}

\begin{proof}[Proof of Proposition~\ref{prop:one}]
At $p=0$, narrow performance is uninformative, so bundling dominates. Suppose $p>0$. Bundled success and failure have likelihood ratios $(h/\ell)^n$ and $d_n$. Narrow success and failure have likelihood ratios $h/\ell$ and $t(p)=(1-ph)/(1-p\ell)$. The upper narrow ratio lies strictly below the upper bundled ratio. By Lemma~\ref{lem:binary}, bundling dominates if and only if $t(p)\geq d_n$. Solving this inequality gives
\[
p\leq\frac{1-d_n}{h-\ell d_n}=\bar p_n.
\]
Narrow responsibility cannot dominate because its maximal likelihood ratio $h/\ell$ is below $(h/\ell)^n$. Hence the experiments are incomparable above the threshold. Direct simplification gives $\bar p_2=(h+\ell)/(1+h\ell)$.
\end{proof}

\begin{proof}[Proof of Proposition~\ref{prop:fixed}]
At $p=0$, separate-project performance is uninformative, so bundling dominates. Suppose $p>0$. 
For $k$ successful separate projects, conditional probabilities are
\[
\binom nk(ph)^k(1-ph)^{n-k}
\quad\text{and}\quad
\binom nk(p\ell)^k(1-p\ell)^{n-k}.
\]
Their likelihood ratio is \eqref{eq:fragment-lr}. These ratios increase in $k$ because $h/\ell>1>t(p)$. The largest ratio is $(h/\ell)^n$, equal to bundled success's ratio, and the smallest is $t(p)^n$. Lemma~\ref{lem:binary} therefore implies that bundling dominates the separate-project experiment exactly when
\[
t(p)^n\geq d_n,
\]
or equivalently $p\leq\widehat p_n$.

Suppose now that $0<p<1$. Any garbling of the separate-project experiment that produces a signal with likelihood ratio $(h/\ell)^n$ can use only the all-success signal, since every other signal has a strictly smaller ratio. Under low competence, that signal has mass $p^n\ell^n<\ell^n$, while bundled success requires mass $\ell^n$. Thus separate projects cannot dominate bundling. At $p=1$, report bundled success if and only if all $n$ separately observed task outcomes are successful. This is a garbling from separate projects to bundling, so separate projects dominate.
\end{proof}

\begin{proof}[Proof of Corollary~\ref{cor:thresholds}]
Since $0<d_n<1$ and $n\geq2$, $d_n^{1/n}>d_n$. The function
\[
g(x)=\frac{1-x}{h-\ell x}
\]
is strictly decreasing because $g'(x)=(\ell-h)/(h-\ell x)^2<0$. Hence $\widehat p_n=g(d_n^{1/n})<g(d_n)=\bar p_n$.

Both thresholds are positive because their arguments lie below one. They are also below one. Indeed,
\[
d_n=\frac{1-h}{1-\ell}
\frac{1+h+\cdots+h^{n-1}}{1+\ell+\cdots+\ell^{n-1}}
>\frac{1-h}{1-\ell},
\]
and $s_n>d_n$. Since $g((1-h)/(1-\ell))=1$ and $g$ decreases, both $g(d_n)$ and $g(s_n)$ are strictly below one.

To show monotonicity, write
\[
\log s_n
=\frac{\log(1-h^n)-\log(1-\ell^n)}{n}
=-\sum_{j=1}^{\infty}\int_{\ell}^{h}x^{nj-1}\,dx.
\]
Every integrand decreases strictly with $n$ on $(0,1)$, so $\log s_n$ and $s_n$ increase strictly. Since $g$ decreases, $\widehat p_n=g(s_n)$ decreases strictly.

Finally,
\[
\log d_n=\log(1-h^n)-\log(1-\ell^n)=-h^n[1+o(1)],
\]
so $1-s_n\sim h^n/n$ and $1-d_n\sim h^n$. Because both denominators converge to $h-\ell$, the stated asymptotic formulas follow.
\end{proof}

\begin{proof}[Proof of Proposition~\ref{prop:value}]
For $p_1<p_2$, start from each separate project outcome generated with success probability $p_2q$ and independently replace a success by failure with probability $1-p_1/p_2$. The resulting success probability is $p_1q$. Thus the $p_2$ separate-project experiment strictly Blackwell dominates the $p_1$ experiment. The thinning is nontrivial for an interior prior: a retained count pools original counts carrying distinct posteriors. Strict Jensen makes $J_{\Separate_{n,p}}(V)$ strictly increasing in $p$ for every strictly convex $V$, while continuity follows from the finite signal probabilities.

At $p=\widehat p_n$, an intermediate separate-project count mixes the two bundled posteriors, so bundling gives strictly greater value. At $p=1$, bundled failure pools distinct separate-project counts, so separate projects give strictly greater value. Continuity and strict monotonicity give a unique crossing in $(\widehat p_n,1)$.
\end{proof}

\begin{proof}[Derivation of \eqref{eq:quadratic-bundle}--\eqref{eq:quadratic-fragment}]
For any finite experiment, denote the conditional signal probabilities by $a_H(y)$ and $a_L(y)$. Define
\[
m(y)=\mu a_H(y)+(1-\mu)a_L(y).
\]
Bayes' rule gives
\[
M(y)-\mu
=\frac{\mu(1-\mu)[a_H(y)-a_L(y)]}{m(y)}.
\]
Therefore
\[
\E[(M-\mu)^2]
=\mu^2(1-\mu)^2
\sum_y\frac{[a_H(y)-a_L(y)]^2}{m(y)}.
\]
Substitution of the binary bundled probabilities gives \eqref{eq:quadratic-bundle}; substitution of the two binomial laws gives \eqref{eq:quadratic-fragment}.
\end{proof}

\begin{proof}[Proof of Proposition~\ref{prop:audit}]
Put $R=h/\ell$ and $K=\rho+(1-\rho)p$. At the corner $\rho=p=0$, narrow performance is uninformative and the result is immediate. Otherwise $K>0$. Under audited narrow responsibility, the type-independent outside outcome can be discarded. After pooling signals with equal likelihood ratio, the type-relevant signal probabilities are $Kq$, $\rho(1-q)$, and $(1-\rho)(1-pq)$ for $q\in\{h,\ell\}$; the first signal has likelihood ratio $R$.

Under audited broad responsibility, pool signals with the same likelihood ratio. The resulting signals are
\begin{center}
\small
\begin{tabular}{@{}lll@{}}
\toprule
Signal & Probability given $q$ & Likelihood ratio \\
\midrule
Double success & $q^2$ & $R^2$ \\
Audited, exactly one success & $2\rho q(1-q)$ & $h(1-h)/[\ell(1-\ell)]$ \\
Unaudited failure & $(1-\rho)(1-q^2)$ & $(1-h^2)/(1-\ell^2)$ \\
Audited double failure & $\rho(1-q)^2$ & $[(1-h)/(1-\ell)]^2$ \\
\bottomrule
\end{tabular}
\end{center}
with zero-mass rows omitted. The likelihood ratios are ordered as displayed around $R$:
\[
R^2>R>
\frac{h(1-h)}{\ell(1-\ell)}>
\frac{1-h^2}{1-\ell^2}>
\left(\frac{1-h}{1-\ell}\right)^2.
\]
To construct the narrow high signal with probabilities $aq$, take double success in full and then add lower-ranked signals in descending likelihood-ratio order. This is the fractional-knapsack, or Neyman--Pearson, solution: replacing selected lower-ranked mass with available higher-ranked mass increases total probability while preserving the target likelihood ratio. For $\rho<1/2$, take all audited one-success outcomes and the fraction
\[
c=\frac{(1-2\rho)h\ell}{(1-\rho)(1+h\ell)}
\]
of unaudited failures. Direct substitution gives the maximal coefficient
\[
a_{\max}=\bar p_2+2\rho(1-\bar p_2).
\]
For $\rho\geq1/2$, take double success and the fraction $1/(2\rho)$ of audited one-success outcomes; their total probability is $q$, so $a_{\max}=1$. Necessity is $K\leq a_{\max}$.

For sufficiency when $\rho<1/2$ and $K=a_{\max}$, send double success and audited one-success outcomes to the high signal. Split unaudited failure with probability $c$ to the high signal and with probability $d=\rho/[2(1-\rho)]$ to audited failure; split audited double failure equally between audited and unaudited failure. Feasibility follows from
\[
c<\frac{1-2\rho}{2(1-\rho)},
\qquad c+d<\frac12,
\]
and the audited-failure mass is
\[
d(1-\rho)(1-q^2)+\frac{\rho}{2}(1-q)^2=\rho(1-q).
\]
The residual matches unaudited failure. For smaller $K$, retain the constructed high signal with probability $K/a_{\max}$ and otherwise relabel it unaudited failure.

For $\rho\geq1/2$, the construction above first produces the one-task experiment. After its success, report the narrow high signal with probability $K$ and unaudited failure with probability $1-K$; after its failure, report audited failure with probability $\rho$ and unaudited failure otherwise. This reproduces the narrow probabilities. Solving $K\leq a_{\max}$ gives \eqref{eq:audit-frontier}. Finally, narrow responsibility has no likelihood ratio above $R$, whereas broad double success has ratio $R^2$, so narrow responsibility never dominates.
\end{proof}

\end{document}